\newcommand{\newref}[2][]{\hyperref[#2]{#1~\ref*{#2}}}
\renewcommand{\eqref}[1]{\hyperref[#1]{(\ref*{#1})}}
\numberwithin{equation}{section}
\newcommand{\tref}[1]{\newref[Theorem]{#1}}
\newcommand{\lref}[1]{\newref[Lemma]{#1}}
\newcommand{\cref}[1]{\newref[Corollary]{#1}}
\newcommand{\eref}[1]{\newref[Equation]{#1}}
\theoremstyle{plain}
\newtheorem{theorem}{Theorem}[section]
\newtheorem{lemma}[theorem]{Lemma}
\newtheorem{corollary}[theorem]{Corollary}
\theoremstyle{definition}
\DeclareMathOperator*{\pr}{\mathsf{Pr}} 
\DeclareMathOperator*{\ex}{\mathbb{E}}
\newcommand{\rgta}{\rightarrow}
\newcommand{\lfta}{\leftarrow}
\newcommand{\iprod}[2]{\langle #1,#2\rangle}
\newcommand{\rn}{\mathbb{R}^n}
\newcommand{\rnn}{\mathbb{R}^{n \times n}}
\newcommand{\reals}{\mathbb{R}}
\newcommand{\dpm}{\{1,-1\}}
\newcommand{\note}[1]{\marginpar{\tiny *note in TeX*}}
\newcommand{\ignore}[1]{}
\newcommand{\calD}{{\cal D}}
\renewcommand{\phi}{\varphi}
\renewcommand{\epsilon}{\varepsilon}
\newcommand{\poly}{\mathrm{poly}}
\newcommand{\zo}{\{0,1\}}
\newcommand{\cals}{\mathcal{S}}
\newcommand{\calp}{\mathcal{P}}
\newcommand{\calA}{\mathcal{A}}
\title{Almost Optimal Explicit Johnson-Lindenstrauss Transformations}
\author{Raghu Meka\\
UT Austin\\
{\tt raghu@cs.utexas.edu}}
\date{}       
\begin{document}
\begin{titlepage}
\maketitle 
\thispagestyle{empty}
\begin{abstract}
  The Johnson-Lindenstrauss lemma is a fundamental result in probability with several applications in the design and analysis of algorithms in high dimensional geometry. Most known constructions of linear embeddings that satisfy the Johnson-Lindenstrauss property involve randomness. We address the question of explicitly constructing such embedding families and provide a construction with an almost optimal use of randomness: For $0 < \delta,\epsilon < 1$, we give an explicit generator $G:\zo^r \rgta \reals^{s \times n}$ for $s = O(\log (1/\delta)/\epsilon^2)$ such that for all $w \in \rn$, $\|w\| = 1$,
\[ \pr_{y \in_u \zo^r}[\, |\|G(y)w\|^2 -1| > \epsilon\,] \leq \delta,\]
and seed-length $r = O\left(\log (n/\delta) \cdot \log\left(\frac{\log(n/\delta)}{\epsilon}\right)\right)$. In particular, for $\delta = 1/\poly(n)$ and fixed $\epsilon > 0$ we get seed-length $O(\log n \log \log n)$. Previous constructions required at least $O(\log^2 n)$ random bits to get polynomially small error.
\end{abstract}
\end{titlepage}
\section{Introduction}
The celebrated Johnson-Lindenstrauss lemma (JLL) \cite{JohnsonL84} is by now a standard technique for handling high dimensional data. Among its many known variants (see \cite{ArriagaV99, DasguptaG03, IndykM98, Matousek08}), we use the following version originally due to Achlioptas \cite{Achlioptas2003} as reference \footnote{Throughout, $C$ denotes a universal constant. For a multiset $S$, $x \in_u S$ denotes a uniformly random element of $S$.}.
\begin{theorem}[Achlioptas]\label{th:jllbinary}
For all $w \in \reals^n$, $\|w\| = 1$, $\epsilon > 0$, $s = C\log(1/\delta)/\epsilon^2$,
\[ \pr_{A \in_u \dpm^{s \times n}}[\; |\, \|(1/\sqrt{s})\,A w\|^2 - 1\,| \geq \epsilon\;] \leq \delta.\]
\end{theorem}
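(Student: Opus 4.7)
Let $A$ have independent Rademacher rows $a_1,\dots,a_s \in \dpm^n$, and let
$Y_i \eqdef \langle a_i, w\rangle = \sum_j a_{ij} w_j$. Because the $a_{ij}$ are independent with mean zero and variance one, $\ex[Y_i^2] = \sum_j w_j^2 = 1$. So
\[
\|A w\|^2 \;=\; \sum_{i=1}^s Y_i^2
\]
is a sum of $s$ i.i.d.\ nonnegative random variables of mean $1$. The goal reduces to the standard Chernoff-type two-sided tail bound
\[
\pr\Bigl[\,\bigl|\,\littlesum_{i=1}^s Y_i^2 - s\,\bigr| > \epsilon s\,\Bigr] \;\leq\; 2 e^{-c \epsilon^2 s},
\]
for $0 < \epsilon < 1$; plugging $s = C\log(1/\delta)/\epsilon^2$ then finishes.

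To prove the tail bound I would follow the Chernoff / MGF approach. The one step that is not immediate is controlling $\ex[\exp(\lambda Y_i^2)]$ for a single squared Rademacher sum with $\sum_j w_j^2 = 1$. I would use the Gaussian linearization identity $\exp(\lambda y^2) = \ex_{g \sim \calN(0,1)}[\exp(\sqrt{2\lambda}\, g\, y)]$ valid for $\lambda \geq 0$. Swapping expectations and using independence of the $a_{ij}$,
\[
\ex_{a}\bigl[\exp(\lambda Y_i^2)\bigr]
\;=\; \ex_g \littlesum_{}\!\prod_{j=1}^n \cosh\!\bigl(\sqrt{2\lambda}\, g\, w_j\bigr)
\;\leq\; \ex_g \prod_{j=1}^n \exp\!\bigl(\lambda g^2 w_j^2\bigr)
\;=\; \ex_g[\exp(\lambda g^2)] \;=\; (1-2\lambda)^{-1/2},
\]
for $0 \leq \lambda < 1/2$, where I used $\cosh(t) \leq e^{t^2/2}$ and $\sum_j w_j^2 = 1$. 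By independence of the rows,
\[
\ex\bigl[\exp\bigl(\lambda \littlesum_i Y_i^2\bigr)\bigr] \;\leq\; (1-2\lambda)^{-s/2}.
\]

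The rest is routine optimization. For the upper tail, Markov gives $\pr[\sum_i Y_i^2 > (1{+}\epsilon)s] \leq e^{-\lambda(1+\epsilon)s}(1-2\lambda)^{-s/2}$; choosing $\lambda = \epsilon/(2(1+\epsilon))$ yields a bound of the form $e^{-c\epsilon^2 s}$ for $0 < \epsilon < 1$ (using $\log(1+\epsilon) \geq \epsilon - \epsilon^2/2$). The lower tail is handled symmetrically: for $\lambda < 0$, the same linearization argument (now using $\exp(-|\lambda|y^2) = \ex_g \exp(i\sqrt{2|\lambda|}gy)$ or, equivalently, a direct even-moment expansion) gives $\ex[\exp(\lambda Y_i^2)] \leq (1-2\lambda)^{-1/2}$ for $\lambda \in (-\infty,1/2)$, and optimizing reproduces the Gaussian-chi-squared style $e^{-c\epsilon^2 s}$ bound.

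\textbf{Main obstacle.} The only nontrivial step is the MGF computation for $Y_i^2$. The Gaussian-linearization trick makes it painless by reducing to a product over coordinates where $\cosh(t) \leq e^{t^2/2}$ applies; any alternative (e.g.\ expanding $Y_i^2 = 1 + 2\sum_{j<k} a_{ij}a_{ik} w_j w_k$ and bounding the Rademacher chaos by $\|ww^\top - \mathrm{diag}(w_j^2)\|$-type quantities, a la Hanson--Wright) also works but is heavier than necessary here. Once the MGF bound is in hand, the theorem follows from standard Chernoff bookkeeping.
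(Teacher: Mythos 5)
The paper does not prove this theorem at all: it is quoted from Achlioptas \cite{Achlioptas2003} and used as a black box, so there is no in-paper argument to compare against. Your MGF/Chernoff plan is the standard route (and essentially the one Achlioptas himself follows), and the upper-tail half is sound: the Gaussian-linearization identity, the swap of expectations, $\cosh(t)\le e^{t^2/2}$, and $\ex_g[e^{\lambda g^2}]=(1-2\lambda)^{-1/2}$ all check out, giving $\ex[e^{\lambda\sum_i Y_i^2}]\le(1-2\lambda)^{-s/2}$ for $0\le\lambda<1/2$. One bookkeeping slip: to conclude $e^{-\frac{s}{2}(\epsilon-\log(1+\epsilon))}\le e^{-c\epsilon^2 s}$ you need an \emph{upper} bound on $\log(1+\epsilon)$, e.g.\ $\log(1+\epsilon)\le\epsilon-\epsilon^2/2+\epsilon^3/3$; the inequality you cite, $\log(1+\epsilon)\ge\epsilon-\epsilon^2/2$, points the wrong way and yields nothing.

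The genuine issue is the lower tail. The claim that $\ex[e^{\lambda Y_i^2}]\le(1-2\lambda)^{-1/2}$ for \emph{all} $\lambda<0$ is false in general: take $w=(1/\sqrt2,1/\sqrt2)$, so $Y^2\in\{0,2\}$ each with probability $1/2$; then $\ex[e^{-tY^2}]=\tfrac12+\tfrac12 e^{-2t}\to\tfrac12$ as $t\to\infty$ while $(1+2t)^{-1/2}\to0$ (the inequality already fails around $t=1.5$). Correspondingly, the proposed linearization via $\ex_g[e^{i\sqrt{2|\lambda|}gy}]$ leads to $\ex_g\prod_j\cos(\sqrt{2|\lambda|}\,g\,w_j)$, and the needed pointwise bound $\cos t\le e^{-t^2/2}$ fails for large $t$, so that route does not close. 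The fix is exactly the alternative you mention in passing: a direct expansion $\ex[e^{-tY^2}]\le 1-t\,\ex[Y^2]+\tfrac{t^2}{2}\ex[Y^4]\le 1-t+\tfrac32 t^2\le e^{-t+\frac32 t^2}$, using $\ex[Y^4]\le 3$ (Khintchine, or \lref{lm:kkineq}); choosing $t=\epsilon/3$ gives $\pr[\sum_i Y_i^2<(1-\epsilon)s]\le e^{-\epsilon^2 s/6}$. (Alternatively, one can verify the chi-squared MGF domination only for the bounded range of $\lambda$ actually used.) With that substitution — and noting that for $\epsilon\ge1$ only the upper tail is nontrivial — your argument is correct and complete.
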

We say a family of random matrices has the JL property (or is a JL family) if a similar condition holds. In typical applications of JLL, $\delta$ is taken to be  $1/\poly(n)$ and the goal is to embed a given set of $\poly(n)$ points in $n$ dimensions to $O(\log n)$ dimensions with distortion at most $\epsilon$ for a fixed constant $\epsilon$. This is the setting we concern ourselves with. 

Most results on embedding the Euclidean space as above are probabilistic in nature. However, a simple probabilistic argument shows that there exists a fixed collection of $\poly(n,1/\delta)$ linear mappings satisfying the JL property. Despite much attention, the best known constructions of JL families use at least $O(\log n \log(1/\delta))$ random bits \cite{ClarksonW09}. Besides being a natural problem in geometry as well as derandomization, an explicit family of Johnson-Lindenstrauss transformations would likely help derandomize other geometric algorithms and metric embedding constructions. Further, having an explicit construction is of fundamental importance for streaming algorithms as storing the entire matrix (as opposed to the randomness required to generate the matrix) is often too expensive in the streaming context. 

Our main result is an explicit generator that takes roughly $O(\log n \log\log n)$ random bits and outputs a matrix $A \in \reals^{s \times n}$ satisfying the JL property. 

\begin{theorem}\label{th:main}
  For every $0 < \epsilon,\delta < 1$, there exists an explicit generator $G:\zo^r \rgta \reals^{n \times s}$ for $s = C\log(1/\delta)/\epsilon^2$, such that for every $w \in \rn$, $\|w\| = 1$,
\[ \pr_{y \in_u \zo^r}[\, |\,\|G(y)w\|^2 - 1\,| > \epsilon\,] \leq \delta.\]
The seed-length of the generator is $r = O\left(\log (n/\delta) \cdot \log\left(\frac{\log(n/\delta)}{\epsilon}\right)\right)$.
\end{theorem}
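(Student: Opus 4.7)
The plan is to derandomize the standard moment-method proof of \tref{th:jllbinary}. For uniform $A \in \dpm^{s \times n}$ and a unit vector $w$, writing $X_i = \iprod{a_i}{w}$ one has $\|Aw\|^2 - s = \sum_{i=1}^s (X_i^2 - 1)$, a polynomial of degree two in each row. A direct computation gives $\ex[(\|Aw\|^2 - s)^{2p}] \leq (Csp)^p$, and Markov's inequality with $p = \Theta(\log(1/\delta))$ and $s = \Theta(p/\eps^2)$ yields the tail bound in the theorem. The goal is then to exhibit an explicit generator $G$ under which the same $(Csp)^p$ bound on the $2p$-th central moment continues to hold up to constants.

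The natural strategy is to build $G$ as a layered construction. At the outer layer, hash the $n$ coordinates into $B$ buckets via a $t$-wise independent hash $h$ (for $t = \Theta(\log(1/\delta))$ and $B$ poly-logarithmic in $1/\delta$) so that, with probability at least $1-\delta/2$, every bucket carries $\ell_2$-mass $O(1/B)$; one conditions on a good $h$ and absorbs the bad event into the failure budget. Within each bucket, $\pm 1$ signs are assigned by a $k$-wise independent generator with $k$ just large enough to match the low-order moments of that bucket's contribution to each $X_i$. Across buckets the per-bucket sign generators are tied together by an $\eps$-biased space over the composed seeds, so that cross-bucket products appearing in the moment expansion are reproduced up to small multiplicative error. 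The hash costs $O(\log n \cdot \log(1/\delta))$ bits but is reused across all $s$ rows; the per-bucket $k$-wise generators contribute $O(\log n \cdot \log(\log(n/\delta)/\eps))$ bits after composition; and the outer biased space contributes $O(\log(n/\delta))$, giving the claimed seed length.

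The main obstacle is the moment analysis under this pseudorandom construction. Expanding $(\|Aw\|^2 - s)^{2p}$ yields a sum over tuples of row-indices and per-row off-diagonal degree-two monomials, for a total degree at most $4p$ in the entries of $A$; one must partition these monomials according to how their variables distribute across the buckets chosen by $h$. Monomials supported in a single bucket are handled cleanly by the per-bucket limited independence. Cross-bucket monomials are the delicate case: each such monomial is small because it factors into a product over buckets weighted by the (small) per-bucket $\ell_2$ masses, but the lack of independence across buckets means that the outer $\eps$-biased generator introduces a multiplicative error that must be controlled uniformly over all $(sB)^{O(p)}$ such monomials. The heart of the argument is showing that, by choosing $B$, $k$, and the outer bias in balance, these errors sum to at most a constant factor of $(Csp)^p$; this balancing is what forces the $\log(\log(n/\delta)/\eps)$ overhead in the final seed length.
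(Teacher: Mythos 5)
Your proposal takes a genuinely different route from the paper. The paper's proof is an iterative dimension-reduction scheme: (i) regularize the vector by multiplying by $HD(x)$, where $H$ is the normalized Hadamard matrix and $x$ is $k$-wise independent in $\dpm^n$, so that $\|HD(x)w\|_\infty = O(n^{-1/4})$ with high probability (\lref{lm:regularize}); (ii) use an explicit oblivious sampler (\cref{cor:sampler}) to subsample $O(\sqrt{n}\log(1/\delta)/\epsilon^2)$ coordinates while preserving the $\ell_2$ norm (\lref{lm:dimreduce}); (iii) iterate $O(\log\log n)$ times to reach $\poly\log n$ dimensions (\lref{lm:recursion}, \tref{th:recurse}); and (iv) finish with a single application of \tref{th:jllbounded} (Clarkson--Woodruff) using $O(\log(1/\delta))$-wise independent signs. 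The seed-length arithmetic then comes from the geometric decay of $n_i$ and the growth of $k_i$. Your proposal instead tries to carry out a one-shot moment argument under a composed pseudorandom distribution (hash-to-buckets, per-bucket $k$-wise signs, $\epsilon$-biased gluing). That is closer in spirit to the contemporaneous Kane--Nelson approach than to this paper.

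However, there is a concrete gap that breaks the proposal as written. You claim that a $t$-wise independent hash $h:[n]\rgta[B]$ ensures, with probability $1-\delta/2$, that every bucket carries $\ell_2$-mass $O(1/B)$, and you then condition on this event. This is false for general unit vectors: if $w$ is a standard basis vector $e_1$, every hash puts all the mass in a single bucket, so the event has probability $0$, and it cannot be ``absorbed into the failure budget.'' The reason the paper's scheme works is precisely that it inserts a rotation step ($HD(x)$) \emph{before} any coordinate-based sampling; that rotation provably flattens the vector so that $\|HD(x)w\|_\infty = O(n^{-1/4})$, at which point sampling coordinates is safe. Your construction has no analogue of this regularization, so the very first ``good-hash'' event you condition on need not occur. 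Separately, the moment analysis you describe---partitioning monomials by bucket support and controlling the $\epsilon$-biased error uniformly over $(sB)^{O(p)}$ cross-bucket monomials---is only sketched; it is not established that the errors sum to a constant factor of $(Csp)^p$, and that balancing is the entire technical content of the approach. As it stands, the proposal is a plan, not a proof, and the plan has a missing ingredient (regularization) without which the plan cannot be salvaged.
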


Our construction is elementary in nature using only standard tools in derandomization such as $k$-wise independence and oblivious samplers \cite{Goldreich97}. The construction has the additional property that for $\delta = 1/\poly(n)$, the matrix-vector product $G(y)w$ can be computed efficiently in time $O(n\log n)$. The computational efficiency does not follow directly from the dimensions of $G(y)$, as in our construction $G(y)$ is obtained by composing several matrices some of which are of dimension $\tilde{O}(\sqrt{n}) \times n$. Nevertheless, the large matrices are obtained from the discrete Fourier transform matrix facilitating fast matrix-vector product computations. 

Further, as one of the motivations for derandomizing JLL is its potential applications in streaming, it is important that the entries of the generated matrices be computable in small space. We observe that for any $i \in [s], j \in [n], y \in \zo^r$, the entry $G(y)_{ij}$ can be computed in space $O(\log n)$ and time $O(n^{2+o(1)})$ (for fixed $\epsilon$, $\delta > 1/\poly(n)$). (See proof of \tref{th:main} for the exact bound)

\subsection{Related Work}
Independently, Kane and Nelson \cite{KaneN10} obtained a construction that is similar in spirit to ours and achieves a slightly better seed-length of $r = O(\log n + \log(1/\delta)\log(\log(1/\delta)/\epsilon))$. Note that for the important case of $\delta$ polynomially small, our seed-length is the same as theirs.

The $\ell_2$ streaming sketch of Alon et al.~\cite{AlonMS99} implies an explicit JL family with seed-length $O(\log n)$ for embedding $\reals^n$ into $\reals^s$ with distortion $\epsilon$ and error $\delta$, where $s = O(1/\epsilon^2\delta)$. Karnin et al.~\cite{KarninRS2009} construct an explicit family for embedding $\reals^n$ into $\reals^s$ with distortion $\epsilon = 1/s^{-C}$ and error $\delta = 1/s^{-c}$. The seed-length they achieve is $(1+o(1))\log n + O(\log^2 s)$. 

The works of Diakonikolas et al.~\cite{DKN10} and Meka and Zuckerman \cite{MekaZ10} construct pseudorandom generators for degree $2$ threshold functions achieving a seed-length of $\log n\cdot \poly(1/\delta)$ for fooling with error at most $\delta$. As derandomizing the JL lemma is a special case of fooling degree 2 PTFs, these works give a JL family with seed-length $\log n \cdot \poly(1/\delta)$. 

The best known explicit JL family is the construction of Clarkson and Woodruff \cite{ClarksonW09} who show that a random scaled Bernoulli matrix with $O(\log(1/\delta))$-wise independent entries satisfies the JL lemma. We make use of their result in our construction.

We also note that there are efficient non-black box derandomizations of JLL, \cite{EngebretsenIO02}, \cite{Sivakumar2002}. These works, take as input $N$ points in $\reals^n$, and deterministically compute an embedding (that depends on the input set) into $\reals^{O(\log N)/\epsilon^2}$ which preserves all pairwise distances between the given set of $N$ points.

Finally, we remark that our goal as well as result is very different from those of the recent works \cite{AilonC09, AilonL09, DasguptaKS10, AilonL10, KaneN10} on {\sl fast} or {\sl sparse} Johnson-Lindenstrauss transformations as pioneered by the seminal work of Ailon and Chazelle \cite{AilonC09}. The goal in these works is to design a family of embedding matrices for which the matrix-vector products $Ax$ can be computed efficiently (usually $O(n\log n)$) and are mainly concerned with the setting where the desired error probability $\delta$ is exponentially small. In contrast, we are mainly interested in the case where $\delta$ is polynomially small but want to save on randomness.

\subsection{Outline of Construction}
Our construction is based on a simple iterative scheme: We reduce the dimension from $n$ to $\tilde{O}(\sqrt{n})$ using $k$-wise independence and {\sl oblivious samplers} \cite{Goldreich97} and iterate for $O(\log\log n)$ steps.

Fix a vector $w \in \rn$ with $\|w\| = 1$ and let $\delta = 1/\poly(n)$. We first use an idea of Ailon and Chazelle \cite{AilonC09} who give a family of unitary transformations $\mathcal{R}$ from $\rn \rgta \rn$ such that for every $w \in \rn$ and $V \in_u \mathcal{R}$, the vector $Vw$ is {\sl regular} in the sense that $\|Vw\|_\infty = O(\sqrt{\log n/n})$. We derandomize their construction using limited independence to get a bound of $\|Vw\|_\infty = O(n^{-1/4})$.

We next observe that for a vector $w \in \rn$, with $\|w\|_\infty = O(n^{-1/4} \|w\|_2)$ projecting onto a random set of $O(n^{1/2}\log(1/\delta)/\epsilon^2)$ coordinates preserves the $\ell_2$ norm with distortion at most $\epsilon$. We then note that the random set of coordinates can be chosen using efficient samplers as in \cite{Goldreich97}. The idea of using samplers is due to Karnin et al.~\cite{KarninRS2009} who use samplers for a similar purpose. 

Finally, iterating the above scheme $O(\log \log n)$ times we obtain an embedding of $\rn$ to $\reals^{\poly\log n}$ using roughly $O(\log n\log\log n)$ random bits. We then apply the result of Clarkson and Woodruff and perform the final embedding into $O(\log(1/\delta)/\epsilon^2)$ dimensions by using a random scaled Bernoulli matrix with $O(\log(1/\delta))$-wise independent entries. 
\section{Preliminaries}
Let $H_n \in \{-1/\sqrt{n},1/\sqrt{n}\}^{n \times n}$ be the normalized Hadamard matrix such that $H_n^T H_n = I_n$ (we drop the suffix $n$ when dimension is clear from context). While the Hadamard matrix is known to exist for powers of $2$, for clarity, we ignore this minor technicality and assume that it exists for all $n$.

We make use of Khintchine-Kahane inequalities (cf.~\cite{LedouxT}).
\begin{lemma}[Khintchine-Kahane]\label{lm:kkineq}
  For every $w \in \rn$, $x \in_u \dpm^n$, $k > 0$,
\[ \ex[\,|\iprod{w}{x}|^k\,] \leq k^{k/2} \ex[\,|\iprod{w}{x}|^2\,]^{k/2} = k^{k/2}\|w\|^k.\]
\end{lemma}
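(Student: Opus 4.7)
The plan is to prove the inequality first for even positive integers $k = 2m$ by a direct moment computation, and then extend to general $k > 0$ by an interpolation argument.

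First, I would expand the even moment of $X := \iprod{w}{x} = \sum_j w_j x_j$ as
\[\ex[X^{2m}] = \sum_{i_1,\ldots,i_{2m} \in [n]} w_{i_1}\cdots w_{i_{2m}}\,\ex[x_{i_1}\cdots x_{i_{2m}}].\]
Because the $x_j$ are independent Rademacher variables satisfying $x_j^2 = 1$, the expectation $\ex[x_{i_1}\cdots x_{i_{2m}}]$ vanishes unless every index in $[n]$ appears an even number of times among $i_1,\ldots,i_{2m}$, in which case it equals $1$. Grouping surviving terms by the multiplicity pattern $(2k_1,\ldots,2k_n)$ with $k_1+\cdots+k_n = m$ yields
\[\ex[X^{2m}] = \sum_{k_1+\cdots+k_n = m}\frac{(2m)!}{(2k_1)!\cdots(2k_n)!}\prod_{j=1}^n w_j^{2k_j}.\]

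Next I would use the elementary inequality $(2k)! \geq 2^k k!$ to bound each multinomial coefficient by $(2m-1)!!\cdot\binom{m}{k_1,\ldots,k_n}$, so that the ordinary multinomial theorem gives
\[\ex[X^{2m}] \leq (2m-1)!! \cdot \Bigl(\,\textstyle\sum_j w_j^2\,\Bigr)^m = (2m-1)!!\cdot \|w\|^{2m}.\]
The crude bound $(2m-1)!! \leq (2m)^m = k^{k/2}$ then finishes the even-integer case. For general real $k > 0$, I would invoke Lyapunov's inequality $\ex[|X|^p]^{1/p} \leq \ex[|X|^q]^{1/q}$ for $p \leq q$, taking $q = 2m$ to be the smallest even integer at least $k$; the resulting bound $\ex[|X|^k] \leq (2m)^{k/2}\|w\|^k$ agrees with the claimed $k^{k/2}\|w\|^k$ up to a factor at most $(1+2/k)^{k/2} \leq e$, which can be absorbed into the universal constants used elsewhere in the paper (and for $k \leq 2$ one applies Jensen directly to reduce to $k = 2$).

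The main obstacle is purely cosmetic: getting the exact constant $k^{k/2}$ for every real $k > 0$ (rather than $(2\lceil k/2\rceil)^{k/2}$) would require either passing to the moment generating function $\ex[e^{tX}] = \prod_j \cosh(tw_j) \leq e^{t^2\|w\|^2/2}$ and integrating the associated sub-Gaussian tail bound $\pr[|X| \geq s] \leq 2e^{-s^2/(2\|w\|^2)}$ against the measure $k\,s^{k-1}\,ds$, or a slightly more delicate handling of the rounding from $k$ to $2\lceil k/2\rceil$ via Stirling's formula. The combinatorial core of the argument---the identification of surviving terms in the multinomial expansion with pairings---is entirely routine.
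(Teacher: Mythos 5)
Your proof is correct where it matters, and it takes a genuinely different route from the paper, which offers no proof of \lref{lm:kkineq} at all and simply cites the literature (cf.~\cite{LedouxT}). Your combinatorial argument is sound: the expansion of $\ex[X^{2m}]$ over even multiplicity patterns, the bound $\frac{(2m)!}{(2k_1)!\cdots(2k_n)!}\le (2m-1)!!\binom{m}{k_1,\ldots,k_n}$ via $(2k)!\ge 2^k k!$, and $(2m-1)!!\le (2m)^m$ together give exactly the stated constant $k^{k/2}$ for every even integer $k=2m$. That is the only case the paper ever uses: in \lref{lm:regularize} the inequality is applied with $k$ an even integer, where $\ex[v_i^k]$ is a degree-$k$ polynomial in the coordinates of $x$ --- which is also why $k$-wise independence suffices there, a point your moment expansion makes transparent (it only touches joint moments of at most $2m$ coordinates), whereas the MGF/tail-integration route you mention would not directly apply under limited independence. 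Regarding the constant for non-even $k$: you correctly flag that Lyapunov only yields $(2\lceil k/2\rceil)^{k/2}\le e\,k^{k/2}$; note that the lemma as literally stated is in fact false for $k<1$ (take $n=1$, $w=e_1$, $k=1/2$: the left side is $1$ while $k^{k/2}=2^{-1/4}<1$), so some slack in that range is unavoidable, and for real $k\ge 2$ the clean constant $k^{k/2}$ (indeed $(k-1)^{k/2}$) comes from hypercontractivity rather than from the sub-Gaussian tail integration you sketch, which itself loses constants near $k=2$. None of this affects the paper's use of the lemma, since only even integers $k\ge 2$ are ever invoked; your self-contained proof of that case is a legitimate, more elementary substitute for the citation.
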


We use efficient oblivious samplers as in \cite{Goldreich97}.
\begin{theorem}\label{th:sampler}
For every $\epsilon, \delta$ there exists $s = C\log(1/\delta)/\epsilon^2$ such that the following holds. There exists an explicit collection of subsets of $[n]$, $\mathcal{S}(n,\epsilon,\delta)$, with each $S \in \cal{S}$ of cardinality $|S| = s$, and $|\mathcal{S}| = n  \cdot \poly(1/\epsilon,1/\delta)$ such that for every function $f:[n] \rgta [0,1]$, 
\[ \pr_{S \in_u \cals}\left[\,\left| \frac{1}{s} \sum_{i \in S} f(i) - \ex_{i \in_u [n]}f(i)\right| > \epsilon\,\right] \leq \delta.\]
\end{theorem}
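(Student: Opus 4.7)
The plan is to realize $\mathcal{S}(n,\epsilon,\delta)$ via random walks on an explicit constant-degree expander on vertex set $[n]$, together with a Chernoff-type concentration bound for such walks. Each seed will determine a starting vertex of the walk together with the labels of the edges to traverse, and the output $S$ is the multiset of vertices visited during a walk of length $s$.

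Concretely, I would first fix an explicit $d$-regular graph $G$ on $[n]$ with constant degree $d$ and normalized second eigenvalue $\lambda \leq 1/2$; such graphs exist for every $n$, e.g.\ via the zig-zag construction of Reingold-Vadhan-Wigderson or Ramanujan graph constructions. A seed $y = (v_0, e_1, \ldots, e_{s-1})$ then uses $\log n$ bits for the starting vertex and $(s-1)\log d = O(s)$ bits for the edge labels, and traces out a walk $v_0, v_1, \ldots, v_{s-1}$; let $S(y)$ be the multiset of visited vertices (padded or deduplicated as needed). Invoking Gillman's expander Chernoff bound, for every $f:[n]\to[0,1]$,
\[ \pr_{y}\left[\left|\frac{1}{s}\sum_{i=0}^{s-1} f(v_i) - \ex_{v\in_u [n]} f(v) \right| > \epsilon \right] \leq 2\exp\left(-\Omega\left((1-\lambda)\epsilon^2 s\right)\right), \]
which is at most $\delta$ whenever $s = C\log(1/\delta)/\epsilon^2$ with $C$ a sufficiently large absolute constant. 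Up to duplicate handling this already delivers the concentration claim with the stated sample size.

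The main obstacle is matching the quantitative seed-length bound $|\mathcal{S}| = n \cdot \poly(1/\epsilon, 1/\delta)$. The walk above gives $|\mathcal{S}| = n \cdot d^{s-1} = n \cdot 2^{O(\log(1/\delta)/\epsilon^2)}$, which is only $n \cdot \poly(1/\delta)$ when $\epsilon$ is constant. To remove this $1/\epsilon^2$ factor in the exponent, I would instead pass to an averaging sampler derived from an explicit strong extractor via Zuckerman's equivalence between extractors and samplers: any explicit $(k, \epsilon/4)$-extractor $E:\zo^{\log n + O(\log(1/\epsilon) + \log(1/\delta))} \times \zo^{\log s} \to [n]$ with $k$ chosen appropriately yields, by enumerating all short seeds for each fixed long seed, a subset of $[n]$ of size $s$ that $(\epsilon, \delta)$-approximates $\ex f$ for every $f:[n]\to[0,1]$. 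Instantiating this with any off-the-shelf explicit extractor meeting these parameters produces the claimed $|\mathcal{S}| = n \cdot \poly(1/\epsilon, 1/\delta)$, with the rest of the argument being bookkeeping on the parameters.
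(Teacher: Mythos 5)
The paper never proves \tref{th:sampler}; it imports it wholesale from the samplers literature via the citation to \cite{Goldreich97}, so the comparison here is with the standard constructions covered there. Your first construction (walks on an explicit constant-degree expander plus Gillman's Chernoff bound) is exactly the classical expander-walk averaging sampler and is correct as far as it goes, but, as you yourself note, it yields $|\mathcal{S}| = n \cdot 2^{\Theta(s)} = n \cdot 2^{\Theta(\log(1/\delta)/\epsilon^2)}$, which matches the stated bound only when $\epsilon$ is a constant. The issue is that your proposed fix does not close this gap.

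In the extractor-to-sampler correspondence the number of samples is $2^d$ where $d$ is the extractor's seed length. To get $s = C\log(1/\delta)/\epsilon^2$ samples together with $|\mathcal{S}| = n\cdot\poly(1/\epsilon,1/\delta)$ you therefore need an explicit $(k,O(\epsilon))$-extractor with input length $r = \log n + O(\log(1/\epsilon)+\log(1/\delta))$, entropy deficiency $r-k = \log(1/\delta)+O(1)$, output length $m = \log n$ (i.e., essentially full entropy extraction), and seed length $d = 2\log(1/\epsilon) + \log\log(1/\delta) + O(1)$ --- optimal to within an additive constant by the Radhakrishnan--Ta-Shma lower bound. No ``off-the-shelf'' explicit extractor meets this combination: Trevisan/GUV/RRV-style constructions carry a multiplicative constant greater than $1$ on the seed (so $2^d$ becomes $\poly(\log(n/\delta),1/\epsilon)$ rather than $C\log(1/\delta)/\epsilon^2$), while high-min-entropy extractors in the Goldreich--Wigderson style have seed $\Theta(\log(1/\delta)+\log(1/\epsilon))$, giving $\poly(1/\delta,1/\epsilon)$ samples; Zuckerman's randomness-efficient samplers likewise have sample complexity polynomial, not logarithmic, in $1/\epsilon$-and-$\log(1/\delta)$ tradeoff terms. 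So the sentence ``instantiating this with any off-the-shelf explicit extractor meeting these parameters'' is precisely where the theorem's claimed parameters fail to be delivered; achieving optimal sample complexity and $\log n + O(\log(1/\epsilon\delta))$ randomness simultaneously and explicitly is a genuine difficulty (you correctly sensed the tension with the walk sampler, but then assumed it away). If your goal is only to support the paper's application (Lemma 3.2 and the recursion), note that what is really needed is randomness $O(\log(n/\epsilon\delta))$ per level together with sample complexity $B^2\cdot\poly(\log(1/\delta)/\epsilon)$, and for that the known extractor-based averaging samplers described in \cite{Goldreich97} do suffice after minor parameter bookkeeping --- but that is a weaker statement than the one you set out to prove.
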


\begin{corollary}\label{cor:sampler}
For every $\epsilon, \delta, B > 0$ there exists $s = C\log(1/\delta)B^2/\epsilon^2$ such that the following holds. There exists an explicit collection of subsets of $[n]$, $\mathcal{S}(n,B,\epsilon,\delta)$, with each $S \in \cal{S}$ of cardinality $|S| = s$, and $|\mathcal{S}| = n  \cdot \poly(B/\epsilon,1/\delta)$ such that for every function $f:[n] \rgta [0,B]$, 
\[ \pr_{S \in_u \cals}\left[\,\left| \frac{1}{s} \sum_{i \in S} f(i) - \ex_{i \in_u [n]}f(i)\right| > \epsilon\,\right] \leq \delta.\]
\end{corollary}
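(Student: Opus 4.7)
The plan is to obtain the corollary as a straightforward rescaling reduction to \tref{th:sampler}. Given a function $f : [n] \rgta [0,B]$, define $g : [n] \rgta [0,1]$ by $g(i) = f(i)/B$. Apply \tref{th:sampler} to the function $g$ with deviation parameter $\epsilon' = \epsilon/B$ and error probability $\delta$. This yields an explicit collection $\cals(n, \epsilon', \delta)$ of subsets of $[n]$, each of size $s = C\log(1/\delta)/(\epsilon')^2 = C\log(1/\delta)B^2/\epsilon^2$, and of total cardinality $n \cdot \poly(1/\epsilon', 1/\delta) = n \cdot \poly(B/\epsilon, 1/\delta)$, exactly as required.

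For the concentration bound, I would simply observe that
\[
\left|\,\frac{1}{s}\sum_{i \in S} f(i) - \ex_{i \in_u [n]} f(i)\,\right| = B \cdot \left|\,\frac{1}{s}\sum_{i \in S} g(i) - \ex_{i \in_u [n]} g(i)\,\right|,
\]
so the deviation of the $f$-average by more than $\epsilon$ is equivalent to the deviation of the $g$-average by more than $\epsilon/B = \epsilon'$, which happens with probability at most $\delta$ by \tref{th:sampler}. Taking $\cals(n,B,\epsilon,\delta) := \cals(n, \epsilon/B, \delta)$ then completes the proof.

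There is essentially no obstacle here: the statement is literally a change of units, and the only thing to be a little careful about is bookkeeping the dependence on $B$ in the sample size (quadratic, through $1/(\epsilon')^2$) and in the collection size (polynomial, through $\poly(1/\epsilon')$). Both come out cleanly because \tref{th:sampler} already gives the dependence on $\epsilon$ and $\delta$ in the form we need, and the rescaling only affects $\epsilon$.
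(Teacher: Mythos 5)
Your proof is correct and is essentially identical to the paper's: the paper also defines $\bar f = f/B$, invokes \tref{th:sampler} with accuracy parameter $\epsilon/B$, and takes $\cals(n,B,\epsilon,\delta) = \cals(n,\epsilon/B,\delta)$. There is no difference in approach.
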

\begin{proof}
Follows by taking $\cals = \cals(n,\epsilon/B,\delta)$ as in \tref{th:sampler} and using the condition of \tref{th:sampler} for $\bar{f}:[n] \rgta [0,1]$ defined by $\bar{f}(i) = f(i)/B$.
\ignore{
  Define $g:[n] \rgta [0,1]$ by $g(i) = f(i)/B$. Let $\cals \equiv \cals(n,\epsilon/B,\delta)$ be as in \tref{th:sampler}. Then, for $S \in_u \cals$,
  \begin{align*}
    \pr\left[\,\left| \frac{1}{s} \sum_{i \in S} f(i) - \ex_{i \in_u [n]}f(i)\right| > \epsilon\,\right] = \pr\left[\,\left| \frac{1}{s} \sum_{i \in S} g(i) - \ex_{i \in_u [n]}g(i)\right| > \epsilon/B\,\right] \leq \delta.
  \end{align*}}
\end{proof}

Finally, we use the following result of Clarkson and Woodruff.
\begin{theorem}[Theorem 2.2, \cite{ClarksonW09}]\label{th:jllbounded}
There exist constants $c,C$ such that the following holds. For $0 < \epsilon, \delta < 1$, $s = c\log(1/\delta)/\epsilon^2$, let $A \in \reals^{s \times n}$ be a random matrix with entries in $\{-1/\sqrt{s}, 1/\sqrt{s}\}$ that are $(C\log(1/\delta))$-wise independent. Then, for every $w \in \rn$, $\|w\| = 1$,
\[ \pr_{A}[\; |\, \|\,A w\|^2 - 1\,| \geq \epsilon\;] \leq \delta.\]
\end{theorem}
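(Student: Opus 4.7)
The plan is to view $\|Aw\|^2-1$ as a rescaled zero-diagonal quadratic form in Rademacher variables and bound its tail via the $k$-th moment method, exploiting the fact that $\Theta(\log(1/\delta))$-wise independence is precisely what is needed to compute the relevant moments.

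Write $A_{ij}=r_{ij}/\sqrt{s}$ with $r_{ij}\in\{-1,+1\}$ being $(C\log(1/\delta))$-wise independent Rademachers, and set $X_i=\sum_j r_{ij}w_j$. Using $r_{ij}^2=1$ and $\|w\|^2=1$ one checks
\[ \|Aw\|^2-1 \;=\; \frac{1}{s}\sum_{i=1}^s (X_i^2-1) \;=\; \frac{2}{s}\sum_{i=1}^s\sum_{j<j'} r_{ij}r_{ij'}w_jw_{j'} \;=:\; \frac{Q}{s}. \]
Thus $Q$ is a zero-diagonal quadratic form in the $sn$ Rademacher variables $\{r_{ij}\}$, whose symmetric coefficient matrix $M$ is block-diagonal with $s$ blocks, each equal to $ww^T - \mathrm{diag}(w\odot w)$. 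A direct computation gives $\|M\|_F^2 = s(1-\sum_j w_j^4) \leq s$ and $\|M\|_{\mathrm{op}} \leq \|ww^T\|_{\mathrm{op}} + \|\mathrm{diag}(w\odot w)\|_{\mathrm{op}} \leq 2$.

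The heart of the proof is a Hanson--Wright-style moment bound: for every even $k$,
\[ \ex[Q^k] \;\leq\; (C'k)^{k/2}\|M\|_F^{k} + (C'k)^{k}\|M\|_{\mathrm{op}}^{k}. \]
This is obtained by expanding $Q^k$ into a sum over $k$-tuples of index pairs and observing that $\ex[\prod_\ell r_{u_\ell}r_{v_\ell}]$ vanishes unless every Rademacher index appears an even number of times; the standard matchings / decoupling argument then bounds the surviving combinatorial sum in terms of the Frobenius and operator norms of $M$. The crucial observation for us is that each surviving term involves at most $2k$ distinct Rademacher variables, so the expectation computed under $(2k)$-wise independence is identical to the fully independent value. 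Taking $C \geq 2c$ with $k = c\log(1/\delta)$ therefore makes the bound valid under the hypotheses of the theorem. Markov's inequality then yields
\[ \pr[\,|Q|>\epsilon s\,] \;\leq\; \Bigl(\frac{C'\sqrt{k}\,\|M\|_F}{\epsilon s}\Bigr)^{\!k} + \Bigl(\frac{C'k\,\|M\|_{\mathrm{op}}}{\epsilon s}\Bigr)^{\!k}, \]
and substituting $\|M\|_F\leq\sqrt{s}$, $\|M\|_{\mathrm{op}}\leq 2$, and $s = c'\log(1/\delta)/\epsilon^2$ makes each term at most $\delta$ for sufficiently large $c'$ (the first term becomes $(C'c/c')^{k/2}$ and the second $(2C'c\epsilon/c')^k$).

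The main obstacle is the combinatorial moment bound for zero-diagonal Rademacher quadratic forms, a classical but nontrivial ingredient from the hypercontractivity / Hanson--Wright literature. The new content specific to the limited-independence statement is the remark above: since the moment expansion only depends on joint moments of order at most $2k$, the bound transfers verbatim from the fully independent case to the $(C\log(1/\delta))$-wise independent setting, and no additional work is required to handle the derandomization.
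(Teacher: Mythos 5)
This statement is not proved in the paper at all: it is imported verbatim as Theorem 2.2 of Clarkson--Woodruff and used as a black box, so your proposal should be judged as a self-contained reconstruction rather than against an in-paper argument. As such, your sketch is essentially correct and follows the standard route: the identity $\|Aw\|^2-1=\frac{1}{s}\sum_i(X_i^2-1)$, the block-diagonal zero-diagonal coefficient matrix with $\|M\|_F^2\le s$ and $\|M\|_{\mathrm{op}}\le 2$, the Hanson--Wright-type moment bound, the observation that every monomial in the expansion of $Q^k$ involves at most $2k$ distinct Rademachers so that $(2k)$-wise independence reproduces the fully independent moments, and Markov with $k=\Theta(\log(1/\delta))$ all fit together correctly (modulo harmless slips: $(C'c/c')^{k/2}$ should be $((C')^2c/c')^{k/2}$, $k$ should be rounded to an even integer, and you reuse $c$ for two different constants). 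The one point worth emphasizing is that the mixed form $\ex[Q^k]\le (C'k)^{k/2}\|M\|_F^k+(C'k)^k\|M\|_{\mathrm{op}}^k$ is genuinely needed: plain degree-2 hypercontractivity, which gives only $(C'k)^k\|M\|_F^k$, would not yield any nontrivial bound here since $\|M\|_F=\Theta(\sqrt{s})$, so your argument really does rest on the full Hanson--Wright moment inequality (classical, so citing it is legitimate, but it is the entire analytic content). By comparison, Clarkson and Woodruff's own proof does not invoke Hanson--Wright as a black box; they bound the $\Theta(\log(1/\delta))$-th moment of $\|Aw\|^2-1$ directly by a combinatorial expansion tailored to limited independence. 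Your route trades that bespoke counting for a generic chaos inequality plus the moment-matching transfer, which is cleaner to state but pushes the combinatorial work into the cited lemma; both give the same quantitative conclusion.
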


Note that for all $k,m$, constructions of $k$-wise independent spaces over $\dpm^m$ with seed-length $O(k\log m)$ are known.

\section{Main Construction}
Suppose that $\delta > 1/n^c$ for a constant $c > 0$. If not we first embed the input vector into $\reals^N$ for $N = \lceil 1/\delta \rceil$ by retaining the first $n$ coordinates and setting the other coordinates to be $0$. The parameters we get by working over $\reals^N$ will be the same as those of \tref{th:recurse} and hence of \tref{th:main}. Further, we assume that $\log(1/\delta)/\epsilon^2 = o(n)$ as else JLL is not interesting.

As outlined in the introduction, we first give a family of rotations to regularize vectors in $\rn$. For a vector $x \in \rn$, let $D(x) \in \rnn$ be the diagonal matrix with $D(x)_{ii} = x_i$. 
\begin{lemma}\label{lm:regularize}
  Let $x \in \dpm^n$ be drawn from a $k$-wise independent distribution. Then, for every $w \in \rn$ with $\|w\| = 1$, 
\[ \pr[\, \|HD(x) w\|_\infty > n^{-(1/2-\alpha)}\, ] \leq \frac{k^{k/2}}{n^{\alpha k-1}}.\] 
\end{lemma}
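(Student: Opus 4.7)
The plan is to treat each coordinate of $HD(x)w$ as a random $\pm 1$ linear form and apply Khintchine-Kahane followed by a union bound.

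First I would rewrite the $i$-th coordinate of $HD(x)w$ as an inner product: let $v^{(i)} \in \rn$ be defined by $v^{(i)}_j = H_{ij}\,w_j$, so that
\[ (HD(x)w)_i = \sum_{j=1}^n H_{ij}\,x_j\,w_j = \iprod{v^{(i)}}{x}.\]
Since the Hadamard entries satisfy $H_{ij}^2 = 1/n$, we immediately get $\|v^{(i)}\|_2^2 = (1/n)\|w\|_2^2 = 1/n$, i.e.\ $\|v^{(i)}\| = 1/\sqrtn$.

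Next I would bound the $k$-th moment of $\iprod{v^{(i)}}{x}$. The key point is that although \lref{lm:kkineq} is stated for a fully uniform $x \in_u \dpmn$, the quantity $\ex[\iprod{v^{(i)}}{x}^k]$ is a polynomial of degree exactly $k$ in the coordinates $x_1, \ldots, x_n$. Hence its value is identical for any $k$-wise independent $\pm 1$ distribution, and the Khintchine-Kahane bound carries over (for even $k$, or with a harmless factor otherwise) to give
\[ \ex\bigl[\,|\iprod{v^{(i)}}{x}|^k\,\bigr] \leq k^{k/2}\|v^{(i)}\|^k = k^{k/2}/n^{k/2}.\]

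Then I would apply Markov's inequality to the random variable $|\iprod{v^{(i)}}{x}|^k$ with threshold $n^{-k(1/2-\alpha)}$:
\[ \pr\bigl[\,|\iprod{v^{(i)}}{x}| > n^{-(1/2-\alpha)}\,\bigr] \leq \frac{k^{k/2}/n^{k/2}}{n^{-k(1/2-\alpha)}} = \frac{k^{k/2}}{n^{\alpha k}}.\]
Finally, a union bound over the $n$ coordinates $i \in [n]$ yields the claimed
\[ \pr\bigl[\, \|HD(x)w\|_\infty > n^{-(1/2-\alpha)}\,\bigr] \leq \frac{k^{k/2}}{n^{\alpha k-1}}. \]

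There is no real obstacle here; the only subtlety worth flagging is the justification that Khintchine-Kahane applies to $k$-wise independent $x$, which follows because the relevant $k$-th moment is a polynomial of degree $\leq k$ in the entries of $x$ and is therefore determined by the $k$-wise marginals. The rest is Markov plus union bound.
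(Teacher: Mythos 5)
Your proof is correct and matches the paper's argument essentially line for line: compute $\ex[v_i^2]=1/n$, apply Khintchine--Kahane and Markov to bound $\pr[|v_i|>n^{-(1/2-\alpha)}]\le k^{k/2}n^{-\alpha k}$, then union bound over $i\in[n]$. You are in fact a bit more careful than the paper in explicitly noting that the $k$-th moment is a degree-$k$ polynomial in $x$ and is therefore unchanged under $k$-wise independence (and in flagging the even-$k$ point, which the paper leaves implicit by writing $\ex[v_i^k]$; the $k_i$ chosen later are indeed even).
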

\begin{proof}
  Let $v = HD(x) w$. Then, for $i \in [n]$, $v_i = \sum_j H_{ij} x_j w_j$ and $\ex[v_i^2] = \sum_j H_{ij}^2 w_j^2 = 1/n$. By Markov's inequality and Khintchine-Kahane inequality \lref{lm:kkineq},
  \begin{align*}
    \pr[\, |v_i| > n^{-(1/2-\alpha)}\, ] &\leq \ex[v_i^k]\cdot n^{(1/2-\alpha)k} \leq k^{k/2} n^{(1/2-\alpha)k}/n^{k/2} = k^{k/2} n^{-\alpha k}.
  \end{align*}
The claim now follows from a union bound over $i \in [n]$. 
\end{proof}

We now give a family of transformations for reducing $n$ dimensions to $\tilde{O}(n^{1/2})$ dimensions with distortion at most $\epsilon$. For $S \subseteq [n]$, let $\calp_S:\rn \rgta \reals^{|S|}$ be the projection onto the coordinates in $S$. 

\begin{lemma}\label{lm:dimreduce}
Let $\cals \equiv \cals(n,n^{1/4},\epsilon,\delta)$, $s = O(n^{1/2} \log(1/\delta)/\epsilon^2)$ be as in \cref{cor:sampler} and let $\calD$ be a $k$-wise independent distribution over $\dpm^n$. For $S \in_u \cals$, $x \lfta D$, let random linear transformation $A_{S,x}:\rn \rgta \reals^s$ be defined by $A_{S,x} = \sqrt{n/s}\cdot \calp_S \cdot H D(x)$. Then, for every $w \in \rn$ with $\|w\| = 1$,
\[ \pr[\, |\|A_{S,x}(w)\|^2 - 1 | \geq \epsilon\,] \leq \delta + k^{k/2}/n^{k/8-1}.\]
\end{lemma}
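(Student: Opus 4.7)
The plan is to combine Lemma 3.1 (which regularizes $w$ after the Hadamard twist) with the sampler corollary (which shows subsampling coordinates preserves the $\ell_2$ norm of a regular vector). The random matrix factors as a two-step procedure: first apply $HD(x)$ to turn $w$ into a well-spread vector $v$, then subsample $s$ coordinates via $\mathcal{P}_S$ and rescale by $\sqrt{n/s}$. Both factors preserve the expected squared norm, so the task is to control the two sources of deviation.

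First I would observe that $H$ is orthogonal and $D(x)^2 = I_n$ because $x \in \dpm^n$, so $v := HD(x)w$ satisfies $\|v\|_2^2 = \|w\|_2^2 = 1$ deterministically. The only randomness from $x$ that matters is therefore how spread $v$ is in the $\ell_\infty$ sense. The key calibration step is to apply \lref{lm:regularize} with the parameter $\alpha = 1/8$: then with probability at least $1 - k^{k/2}/n^{k/8-1}$ over $x \leftarrow \calD$, we have $\|v\|_\infty \leq n^{-3/8}$. Call this event $\calE_x$. The choice $\alpha = 1/8$ is tuned precisely so that the pointwise bound on $v_i^2$ matches the range parameter $B = n^{1/4}$ used to define $\cals$.

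Next, condition on $\calE_x$ and define $f:[n] \rgta [0,\infty)$ by $f(i) = n\, v_i^2$. Under $\calE_x$, $f(i) \leq n \cdot n^{-3/4} = n^{1/4} = B$, so $f$ is a legitimate input to \cref{cor:sampler} with range $[0,B]$. Moreover
\[ \ex_{i \in_u [n]} f(i) = \sum_{i=1}^n v_i^2 = \|v\|_2^2 = 1, \qquad \frac{1}{s} \sum_{i \in S} f(i) = \frac{n}{s} \sum_{i \in S} v_i^2 = \|A_{S,x}(w)\|^2. \]
Applying \cref{cor:sampler} with parameters $(n, B = n^{1/4}, \epsilon, \delta)$, the event $|\|A_{S,x}(w)\|^2 - 1| \geq \epsilon$ has probability at most $\delta$ over $S \in_u \cals$, still conditional on $\calE_x$.

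Finally, I would put the two steps together with a union bound over the failure of $\calE_x$ and the failure of the sampler. This yields total failure probability at most $\delta + k^{k/2}/n^{k/8-1}$, matching the stated bound. The argument is essentially mechanical; the one subtle point, which I view as the only real obstacle, is setting $\alpha$ so that the $\ell_\infty$ bound on $v$ delivered by \lref{lm:regularize} aligns with the range parameter $B$ demanded by \cref{cor:sampler}, and verifying that $f(i) = n v_i^2$ has the right mean so that the sampler's additive-error guarantee translates into the multiplicative-error guarantee on $\|A_{S,x}(w)\|^2$.
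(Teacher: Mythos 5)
Your proposal is correct and follows essentially the same route as the paper's own proof: regularize via \lref{lm:regularize} with $\alpha = 1/8$, define $f(i) = n v_i^2$ with range bound $B = n^{1/4}$, apply \cref{cor:sampler}, and union bound the two failure events. No gaps; the independence of $S$ and $x$ that your conditioning implicitly uses is also implicit in the paper.
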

\begin{proof}
  Let $v = HD(x) w$. Then, $\|v\| = 1$ and by \lref{lm:regularize} applied for $\alpha = 1/8$,
\[ \pr[\, \|v\|_\infty > n^{-3/8}\,] \leq k^{k/2}/n^{k/8-1}.\]
Now suppose that $\|v\|_\infty \leq n^{-3/8}$. Define $f:[n] \rgta \reals$ by $f(i) = n \cdot v_i^2 \leq n \cdot n^{-3/4} = n^{1/4} = B$. Then,
\[ \|A_{S,x}(w)\|^2 = (n/s) \| \calp_S(v)\|^2 = \frac{1}{s} \sum_{i \in S} n v_i^2 = \frac{1}{s} \sum_{i \in S} f(i),\]
and $\ex_{i \in_u [n]} f(i) = (1/n) \sum_i n \cdot v_i^2 = 1$. Therefore, by \cref{cor:sampler},
  \begin{align*}
   \pr[\, |\,\|A_{S,x}(w)\|^2 - 1\,| \geq \epsilon\,] &= \pr_S\left[\,\left|\frac{1}{s} \sum_{i \in S} f(i) - \ex_{i \in_u [n]} f(i)\right|\geq \epsilon\,\right] \leq \delta.
  \end{align*}
The claim now follows.
\end{proof}

We now recursively apply the above lemma. Fix $\epsilon,\delta > 0$. Let $\calA(n,k):\rn \rgta \reals^{s(n)}$ be the collection of transformations $\{A_{S,x}: S \in_u \cals, x \lfta D\}$ as in the above lemma for $s(n) =  s(n,n^{1/4},\epsilon,\delta) = Cn^{1/2} \log(1/\delta)/\epsilon^2$. Note that we can sample from $\calA(n,k)$ using $O(k \log n + \log(1/\epsilon) + \log(1/\delta))$ random bits. Let $n_0 = n$, and let $n_{i+1} = s(n_i)$. Let $k_0 = 16(c+1)$ (recall that $\delta > 1/n^c$) and $k_{i+1} = 2^i k_0$. The parameters $n_i,k_i$ are chosen so that $1/n_i^{k_i}$ is always polynomially small. Fix $t > 0$ to be chosen later so that $k_i < n_i^{1/8}$ for $i < t$.

\begin{lemma}\label{lm:recursion}
  For $A_0 \in_u \calA(n_0,k_0), A_1 \in_u \calA(n_1,k_1), \cdots, A_{t-1} \in_u \calA(n_{t-1},k_{t-1})$ chosen independently, and $w \in \rn$, $\|w\| = 1$,
\[ \pr[\, (1-\epsilon)^t \leq \|A_{t-1} \cdots A_1 A_0 (w)\|^2 \leq (1+\epsilon)^t\,] \geq 1 - t \delta - \sum_{i=0}^{t-1} \frac{k_i^{k_i/2}}{n_i^{k_i/8-1}}.\]
\end{lemma}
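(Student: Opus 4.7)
The plan is to prove this by induction on $t$, where the inductive claim is that for $w_i := A_{i-1} \cdots A_0(w)$ (with $w_0 = w$), with probability at least $1 - i\delta - \sum_{j=0}^{i-1} k_j^{k_j/2}/n_j^{k_j/8-1}$ we have $(1-\epsilon)^i \leq \|w_i\|^2 \leq (1+\epsilon)^i$. The base case $i=0$ is trivial since $\|w\|^2 = 1$.

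For the inductive step, condition on the event $E_i$ that the conclusion holds for $w_i$; in particular $\|w_i\|^2 \geq (1-\epsilon)^i > 0$, so the unit vector $\hat{w}_i := w_i/\|w_i\|$ is well defined. The key point is that $A_i$ is sampled independently of $A_0,\ldots,A_{i-1}$, so after conditioning on the realization of those earlier matrices (which fixes $w_i$ and hence $\hat{w}_i$), $A_i$ is still distributed uniformly in $\calA(n_i,k_i)$. Applying \lref{lm:dimreduce} to the unit vector $\hat{w}_i \in \reals^{n_i}$ gives
\[ \pr\bigl[\, \bigl|\|A_i \hat{w}_i\|^2 - 1\bigr| \geq \epsilon \,\bigr] \leq \delta + k_i^{k_i/2}/n_i^{k_i/8-1}.\]
By homogeneity, $\|w_{i+1}\|^2 = \|A_i w_i\|^2 = \|w_i\|^2 \cdot \|A_i \hat{w}_i\|^2$, so on the complement of the above bad event together with $E_i$, we get $(1-\epsilon)^{i+1} \leq \|w_{i+1}\|^2 \leq (1+\epsilon)^{i+1}$.

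A union bound over the at most $\delta + k_i^{k_i/2}/n_i^{k_i/8-1}$ failure at step $i$ and the prior failure probability (from $E_i^c$) yields the inductive claim for $i+1$; taking $i = t$ gives the lemma. The main thing to be careful about is the conditioning argument: we are not merely using a union bound over $t$ independent events, but rather iteratively using the independence of $A_i$ from the sigma-algebra generated by $A_0,\ldots,A_{i-1}$ in order to apply \lref{lm:dimreduce} to the (random but now fixed) unit vector $\hat{w}_i$. Aside from this bookkeeping, the proof is a routine induction, and there is no genuine obstacle once one observes that \lref{lm:dimreduce} extends to arbitrary nonzero vectors by homogeneity of $\|\cdot\|$.
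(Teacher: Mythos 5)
Your proof is correct and follows essentially the same route as the paper: induction on $i$, applying \lref{lm:dimreduce} to the intermediate vector at each step and taking a union bound over the $t$ failure events. The only difference is that you spell out the conditioning on $A_0,\ldots,A_{i-1}$ and the homogeneity normalization $\hat{w}_i = w_i/\|w_i\|$, details the paper leaves implicit.
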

\begin{proof}
  The proof is by induction on $i = 1,\ldots,t$. For $i=1$, the claim is same as \lref{lm:dimreduce}. Suppose the statement is true for $i-1$ and let $v = A_{i-1} \cdots A_0 (w)$. Then, $v \in \reals^{n_{i}}$ and by \lref{lm:dimreduce} applied to $\calA(n_{i},k_{i})$, 
\[ \pr[\, (1-\epsilon) \|v\|^2 \leq \|A_{i}(v)\|^2 \leq (1+\epsilon)\|v\|^2\,] \geq 1- \delta -  \frac{k_i^{k_i/2}}{n_i^{k_i/8-1}}.\]
The claim now follows by induction.
\end{proof}
What follows is a series of simple calculations to bound the seed-length and error from the above lemma. Observe that 
\begin{equation}\label{eq:rec1}
 n^{(1/2)^i} \leq n_i = n^{(1/2)^i} \cdot \left(\frac{C\log(1/\delta)}{\epsilon^2}\right)^{1 + (1/2) + \cdots + (1/2)^{i-1}} \leq n^{(1/2)^i} \left(\frac{C\log(1/\delta)}{\epsilon^2}\right)^2.  
\end{equation}

Let $t = O(\log \log n)$ be such that $2^t = \log n/8\log \log n$. Then, $n_t \leq \log^8 n \cdot (C\log(1/\delta)/\epsilon^2)^2$, and for $i < t$, 
\begin{equation}\label{eq:rec2}
k_i < k_t = 16(c+1) 2^t = 2(c+1)\log n/\log \log n < \log n = n^{(1/2)^t/8} < n_t^{1/8} < n_i^{1/8},  
\end{equation}
where we assumed that $\log\log n > 2c+2$.
Therefore, the error in \lref{lm:recursion} can be bounded by
\begin{align*}
t \delta + \sum_{i=0}^{t-1} \frac{k_i^{k_i/2}}{n_i^{k_i/8-1}} &\leq t \delta + n \sum_{i=0}^{t-1} n_i^{-k_i/16} &\text{\hspace{.3in} (\eref{eq:rec2})}\\
&\leq t \delta + n \sum_{i=0}^{t-1}   n^{- (1/2)^i \cdot 16(c+1) \cdot 2^i/16}  &\text{\hspace{.3in} (\eref{eq:rec1})}\\
&\leq t \delta + t/n^c \leq 2 t \delta  &\text{\hspace{.3in} (as $\delta > 1/n^c$)}.
\end{align*}
Note that,
\begin{multline*}
 k_i \log n_i \leq 16(c+1)\cdot 2^i (\log n/2^i + 2C\log\log(1/\delta) + 4\log(1/\epsilon)) = 
O(\log n + \log n \log(1/\epsilon)/\log\log n).
\end{multline*}

Therefore, the randomness needed after $t = O(\log \log n)$ iterations is 
\begin{align*}
\sum_{i=0}^{t-1} O(k_i \log n_i + \log(1/\epsilon) + \log(1/\delta)) = O(\log n \cdot \log(\log n/\epsilon)).
\end{align*}
Combining the above arguments (and simplifying the resulting expression for seed-length) we get:
\begin{theorem}\label{th:recurse}
There exists an explicit generator that takes $O(\log(n/\delta)\cdot \log(\,\log(n/\delta)/\epsilon\,))$ random bits and outputs a linear transformation $A:\rn \rgta \reals^m$ for $m = O((\log^{10}(n/\delta))/\epsilon^4)$, so that for every $w \in \rn$, $\|w\| = 1$,
\[ \pr[ \,|\;\|Aw\|^2 - 1\;|> (C\log\log n)\epsilon\,] \leq (C\log\log n)\delta.\]
\end{theorem}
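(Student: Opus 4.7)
The plan is to take the generator's output to be $A = A_{t-1}\cdots A_1 A_0$, where each $A_i$ is an independent draw from $\calA(n_i, k_i)$ using the sequences $n_0 = n$, $n_{i+1} = s(n_i)$, $k_0 = 16(c+1)$, $k_{i+1} = 2^i k_0$, and depth $t = O(\log\log n)$ chosen so that $2^t = \log n/(8\log\log n)$; this is exactly the setup introduced just before Lemma \ref{lm:recursion}. The two displayed inequalities \eref{eq:rec1} and \eref{eq:rec2} already verify the side condition $k_i < n_i^{1/8}$ needed to apply Lemma \ref{lm:dimreduce} at every level $i < t$, so Lemma \ref{lm:recursion} delivers the multiplicative envelope $(1-\epsilon)^t \le \|Aw\|^2 \le (1+\epsilon)^t$ with failure probability at most $t\delta + \sum_{i<t} k_i^{k_i/2}/n_i^{k_i/8 - 1}$.

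To turn the multiplicative envelope into the additive bound in the statement, I would use $(1\pm\epsilon)^t \le 1 \pm 2t\epsilon$ (valid whenever $t\epsilon \le 1/2$, and otherwise the theorem is vacuous), which yields the claimed $O(\log\log n)\epsilon$ distortion. For the failure probability, substituting \eref{eq:rec1} and \eref{eq:rec2} into the sum and bookkeeping exactly as in the pre-theorem calculation shows each summand is at most $n^{-c-1}$, so the whole sum is bounded by $t/n^c \le t\delta$ and the failure probability is at most $2t\delta = O(\log\log n)\delta$, using the standing assumption $\delta > 1/n^c$. The output dimension is $m = n_t$, and \eref{eq:rec1} gives $n_t \le (\log n)^8 \cdot (C\log(1/\delta)/\epsilon^2)^2 = O(\log^{10}(n/\delta)/\epsilon^4)$, matching the claim.

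The seed length is the sum of per-level costs $O(k_i \log n_i + \log(1/\epsilon) + \log(1/\delta))$ over $i = 0,\ldots, t-1$; combining $k_i \log n_i = O(\log n + \log n\log(1/\epsilon)/\log\log n)$ with $t = O(\log\log n)$ telescopes to $O(\log n \cdot \log(\log n /\epsilon))$. Applying the padding trick from the opening of Section 3 (embedding into $\reals^N$ for $N = \lceil 1/\delta\rceil$ when $\delta \le 1/n^c$) converts every $\log n$ into $\log(n/\delta)$, yielding the final seed length $O(\log(n/\delta)\cdot \log(\log(n/\delta)/\epsilon))$. The real obstacle here is not any single estimate but the parameter calibration: $k_i$ must double at each level quickly enough to keep $n_i^{k_i}$ polynomially large against the union-bound factor $n$, yet slowly enough that after $t = O(\log\log n)$ levels $k_t$ is still below $n_t^{1/8}$; the choices $k_i = 2^i \cdot 16(c+1)$ and $2^t = \log n/(8\log\log n)$ are precisely what makes both conditions hold simultaneously, and everything else in the proof is routine bookkeeping.
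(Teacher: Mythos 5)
Your proposal matches the paper's own argument essentially step for step: the same iterated family $\calA(n_i,k_i)$ with the same parameter choices, \lref{lm:recursion} for the multiplicative envelope, the same substitution of \eref{eq:rec1} and \eref{eq:rec2} to bound the failure probability by $2t\delta$, the same bound $n_t \le \log^8 n\,(C\log(1/\delta)/\epsilon^2)^2$ for the output dimension, the same seed-length tally, and the padding trick to replace $\log n$ by $\log(n/\delta)$. The only addition is your explicit $(1\pm\epsilon)^t$-to-additive conversion, which the paper leaves implicit, so the proposal is correct and takes the paper's route.
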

We can now obtain our final construction of explicit Johnson-Lindenstrauss families by composing the above family with that of \tref{th:jllbounded}.

\begin{proof}[Proof of \tref{th:main}]
  Follows by composing the transformations of the above theorem for $\epsilon' = \epsilon/C\log\log n$, $\delta' = \delta/C\log\log n$ with those of \tref{th:jllbounded} using $O(\log(1/\delta))$-wise independence. The additional randomness required is $O(\log(1/\delta)\log m)= O(\log (1/\delta)(\log\log (n/\delta) + \log(1/\epsilon))$. 

We next bound the time for computing matrix-vector products for the matrices we output. Let $\delta > 1/n^c$. Note that for $i < t$, the matrices $A_i$ of \lref{lm:recursion} are of the form $\calp_S \cdot H_{n_i} D(x)$ for a $k$-wise independent string $x \in \dpm^{n_i}$. Thus, for any vector $w_i \in \reals^{n_i}$, $A_i w_i$ can be computed in time $O(n_i \log n_i)$ using the discrete Fourier transform. Therefore, for any $w = w_0 \in \reals^{n_0}$, the product $A_{t-1} \cdots A_1 A_0 w_0$ can be computed in time 
\begin{align*}
\sum_{i=0}^{t-1} O(n_i \log n_i) &\leq O(n \log n) + \log n \cdot \sum_{i=1}^{t-1} O\left(n^{1/2^i} (\log(1/\delta)/\epsilon^2)^{2}\right)&\text{ (\eref{eq:rec1})}\\
&= O(n \log n  + \sqrt{n} \log n\log^2(1/\delta)/\epsilon^4).
\end{align*}
It is easy to check that the above bound dominates the time required to perform the final embedding.

A similar calculation shows that for indices $i \in s, j \in [n]$, the entry $G(y)_{ij}$ of the generated matrix can be computed in space $O\left(\sum_i \log n_i\right) = O(\log n + \log(1/\epsilon)\cdot \log\log n)$ by expanding the product of matrices and enumerating over all intermediary indices. The time required to perform the calculation is $O(s \cdot n_t \cdot n_{t-1} \cdots n_0) = n^2\cdot (\log n/\epsilon)^{O(\log \log n)}$. 
\end{proof} 
\newpage
\bibliographystyle{alpha}
\bibliography{refcyclic}
\end{document}